\newtheorem{thm}{Theorem}
\newtheorem{cor}{Corollary}
\newtheorem{lem}{Lemma}
\newtheorem{obs}{Observation}
\theoremstyle{definition}
\newtheorem{defn}{Definition}
\theoremstyle{remark}
\newtheorem{rem}{Remark}
\newcommand{\subge}{\ensuremath{\scriptscriptstyle\geq}}
\newcommand{\subgt}{\ensuremath{\scriptscriptstyle>}}
\newcommand{\nats}{\ensuremath{\mathbb{N}}\xspace}
\newcommand{\ints}{\ensuremath{\mathbb{Z}}\xspace}
\newcommand{\nnints}{\ensuremath{\mathbb{Z}_{\subge}}\xspace}
\newcommand{\rats}{\ensuremath{\mathbb{Q}}\xspace}
\newcommand{\nnrats}{\ensuremath{\mathbb{Q}_{\subge}}\xspace}
\newcommand{\ralgs}{\ensuremath{\overline{\mathbb{Q}}}\xspace}
\newcommand{\nnralgs}{\ensuremath{{\ralgs_{\subge}}}\xspace}
\newcommand{\pralgs}{\ensuremath{{\ralgs_{\subgt}}}\xspace}
\newcommand{\algs}{\ensuremath{\mathbb{A}}\xspace}
\newcommand{\ncsp}{\ensuremath{\mathsf{\#CSP}}\xspace}
\newcommand{\ncspf}[1][\F]{\ensuremath{\mathsf{\#CSP}(#1)}\xspace}
\newcommand{\ncspg}[1][\Gamma]{\ensuremath{\ncsp(#1)}\xspace}
\newcommand{\ncspqw}[2][q]{\ensuremath{\mathsf{\#CSP}_{#1}[#2]}\xspace}
\newcommand{\ncspw}[1][\K]{\ensuremath{\mathsf{\#CSP}[#1]}\xspace}
\newcommand{\funr}[2][D]{\ensuremath{\mathfrak{F}_r(#1,#2)}\xspace}
\newcommand{\fun}[2][D]{\ensuremath{\mathfrak{F}(#1,#2)}\xspace}
\newcommand{\nump}{\textsf{\textup{\#P}}\xspace}
\newcommand{\numpc}{\nump-complete\xspace}
\newcommand{\fp}{\textsf{\textup{FP}}\xspace}
\newcommand{\fpnump}{\ensuremath{\fp^\nump}\xspace}
\newcommand{\wred}{\;{\leq}_\mathrm{w}\onept}
\newcommand{\wequ}{\;{\equiv}_\mathrm{w}\onept}
\newcommand{\CB}{\mathcal{B}}
\newcommand{\C}{\mathcal{C}}
\newcommand{\D}{\mathcal{D}}
\newcommand{\F}{\mathcal{F}}
\newcommand{\K}{\mathbb{K}}
\newcommand{\B}{\mathbb{B}}
\newcommand{\Gamm}{\ensuremath{\Gamma}\xspace}
\newcommand{\set}[1]{\left\{#1\right\}}
\newcommand{\card}[1]{\left|#1\right|}
\newcommand{\cons}[1]{\left\langle #1\right\rangle}
\newcommand{\wt}{\mathsf{w}}
\newcommand{\ba}{\mathbf{a}}
\newcommand{\bb}{\mathbf{b}}
\newcommand{\bu}{\mathbf{u}}
\newcommand{\bv}{\mathbf{v}}
\newcommand{\onept}{\hspace{1pt}}
\newcommand{\halfpt}{\hspace{0.5pt}}
\title{The complexity of weighted and unweighted \ncsp\thanks{Research supported by an EPSRC grant ``The complexity of counting in constraint satisfaction problems'' (Dyer, Goldberg, Jalsenius, Jerrum and Richerby), and by an NSERC Discovery Grant (Bulatov).}}
\author{%
    Andrei Bulatov,%
    \thanks{School of Computing Science, Simon Fraser University,
    University Drive, Burnaby, Canada, V5A 1S6.}\hspace{2mm}
    Martin Dyer,%
    \thanks{School of Computing, University of Leeds, Leeds, LS2~9JT,
    UK.}\hspace{2mm}
    Leslie Ann Goldberg,%
    \thanks{Department of Computer Science, University of Liverpool,
    Liverpool, L69~3BX, UK.}\\[1ex]
    Markus Jalsenius,%
    \thanks{Department of Computer Science,
    University of Bristol, Bristol, BS8 1UB, UK.}\hspace{2mm}
    Mark Jerrum%
    \thanks{School of Mathematical Sciences,
    Queen Mary, University of London,
    Mile End Road, London E1 4NS, UK.}\hspace{2mm}
    and David Richerby%
    \footnotemark[3]
}
\date{November 24, 2011}
\begin{document}
\maketitle
\begin{abstract}
We give some reductions among problems in (nonnegative) weighted
\ncsp which restrict the class of functions that needs to be considered in
computational complexity studies. Our reductions can be applied to both exact and approximate computation. In particular, we show that the recent dichotomy for unweighted \ncsp can be extended to rational-weighted \ncsp.
\end{abstract}

\allowdisplaybreaks
\section{Introduction}\label{sec:introduction}
The counting complexity of the weighted constraint satisfaction
problem, for both exact and approximate computation, has been an active
research area for several years. See, for example,
\cite{Bulato08,BulDal07,CaiChe09,CaChLu10,CaChLu11,CaLuXi09,CreHer96,BuDGJR09,DyGoGJ04,%
DyGoJe09,DyGoJe10,DyGoJe10a,DyGoJR10,DyGoPa07,DyeGre00,DyeRic10,DyeRic11,GoGrJT09}.
The objective is to give a precise categorisation of the
computational complexity of problems in a given class. Easily the
most significant development in this stream of research was a recent
result of Bulatov~\cite{Bulato08}. This establishes a dichotomy for
exact counting in the whole of (unweighted) \ncsp. The
dichotomy is between problems in \fp and problems which are \numpc.
Dyer and Richerby~\cite{DyeRic10} have given an easier proof of this theorem, and have shown it to be decidable~\cite{DyeRic11}.

In this paper, we study equivalences among problems in weighted \ncsp. These
equivalences can greatly simplify the classes of problems which need to be
considered in studies of computational complexity. A particular consequence
of these results is that the dichotomy for unweighted \ncsp can be extended
to nonnegative rational-weighted \ncsp. In the results we present here, the
weights will usually lie in some subset of the nonnegative algebraic numbers, since the proofs do not appear to extend to negative weights~\cite{GoGrJT09}
or complex weights~\cite{CaChLu10}. Neither do we consider general real
numbers, since we want our results to apply to standard models of computation
and their complexity classes. An extension to a suitable model of real number
computation may be possible, though statements about complexity would need to
be modified appropriately.

The plan of the paper is as follows. In Section~\ref{sec:weightedcsp}
we define the weighted constraint satisfaction problem and establish
some notation. In Section~\ref{sec:reduction}, we define a notion of
reducibility, which we call \emph{weighted reduction}, that is used
in all our proofs. Its advantage is that the same reductions apply to
both exact and approximate computation. Section~\ref{sec:equivalence}
proves the equivalence of unweighted and rational-weighted \ncsp.
Section~\ref{sec:single} shows that a weighted \ncsp problem
can be assumed to have only one function, while retaining several
useful restrictions on instances. Finally, in
Section~\ref{sec:binary}, we show that any rational-weighted problem
is computationally equivalent to an unweighted problem with only
binary constraints. Thus any \ncsp problem is equivalent to a
canonical digraph-labelling problem. This gives another proof of the
equivalence of unweighted and rational-weighted \ncsp.

\subsection{Weighted constraint satisfaction}\label{sec:weightedcsp}

Let \ints, \rats, \ralgs and \algs denote the integers, rational
numbers, real algebraic numbers,  and (complex) algebraic numbers, respectively.
Let \nnints, \nnrats and \nnralgs denote the \emph{nonnegative}
numbers in \ints, \rats and \ralgs, respectively.  The \emph{positive}
integers $\nnints\setminus\set{0}$ will be denoted by \nats, and the positive
algebraic numbers $\nnralgs\setminus\set{0}$ by $\pralgs$.
Also $\B$ will denote $\set{0,1}$ and, if $n\in\nats$,
then $[n]$ will denote $\set{1,2,\ldots,n}$.

Let $D=\set{0,1,\ldots,q-1}$ ($q\in \nats$), which we call the
\emph{domain}, and $\K\subseteq\algs$, which we call the \emph{codomain}. Let
\[\funr{\K}\ =\ \set{f\colon D^r\to\K},\qquad\fun{\K}\ =\
\bigcup_{r\geq 1}\funr{\K}\,,\]
denote the sets of functions of all \emph{arities} from $D$ to $\K$.
We will write $r=r(f)$ for the arity of $f\in\fun{\K}$. If $r(f)=1$,
$f$ is called a \emph{unary} function and, if $r(f)=2$, it is a
\emph{binary} function.

A problem $\ncspf$ is parameterised by a finite set  $\F\subset \fun{\K}$ for some $D$ and $\K$.  An \emph{instance} $I$ of $\ncspf$ consists of a finite set of \emph{variables}~$V$ and a finite set of \emph{constraints}~$\C$. A constraint $\kappa=\cons{\bv_\kappa,f_\kappa}\in \C$ consists of a function $f_\kappa\in \mathcal{F}$ (of arity~$r_\kappa=r(f_\kappa)$) and a \emph{scope}, a sequence $\bv_\kappa=(v_{\kappa,1},\ldots,v_{\kappa,r_\kappa})$ of variables from~$V\!$, which need not be distinct. A \emph{configuration} $\sigma$ for the instance~$I$ is a function $\sigma\colon V\to D$. If $\bv=(v_1,\ldots,v_r)$, we will write $\sigma(\bv)$ for $\big(\sigma(v_1),\ldots,\sigma(v_r)\big)$. The \emph{weight} of the configuration $\sigma$ is given by
\[\wt(\sigma)=\prod_{\kappa\in \C}
f_\kappa\big(\sigma(\bv_\kappa)\big)\,.\]

Finally, the \emph{partition function} $Z_\F(I)$ is given, for
an instance $I$,  by
\[
 Z_\F(I)= \!\!\!\! \sum_{\sigma\colon V\to D} \!\!\!\! \wt(\sigma)\,.
\]
Then \ncspf denotes the problem of computing the function $Z_\F$. We
will write
\[
    \ncspqw{\K} = \set{\ncspf:\F\subset\fun{\K},\,
|D|=q},\ \ \ncspw = \bigcup_{q=2}^\infty\ncspqw{\K}\,.
\]
The case $q=1$ is clearly trivial, so we omit it from the definition
of \ncspw. The case $q=2$ is called \emph{Boolean} \ncspw.

If $\Gamm$ is a set of \emph{relations}, as in~\cite{Bulato08,BulDal07}, we
regard it as a set of functions $\F(\Gamm)\subset\fun{\B}$, so \ncsp means
\ncspw[\B]. If $R\in\Gamm$ is $r$-ary, we define $f(R)\in\F$ so that, for each $\ba\in D^r\!$, $f(\ba)=1$ if $\ba\in R$, and otherwise $f(\ba)=0$. Then we write $\ncsp(\Gamm)$ rather than $\ncsp(\F(\Gamm))$, and $Z_\Gamm$ rather than $Z_{\F(\Gamm)}$.

We consider here only \emph{non-uniform} \ncsp, where $D$ and $\F$
are considered to be objects of constant size. Thus it is only the variable
set $V\!$, and the constraint set $\C$, that determine the size of an instance.

Various other restrictions on \ncspw[\K] have been considered in the
literature, often in combination. For example, we may insist that $|\F|=m$,
for some $m\in\nats$, particularly $m=1$, e.g.~\cite{CaChLu10}. We may insist
that no function has arity greater than $r$, for some $r\in\nats$,
particularly $r=2$, e.g.~\cite{CaiChe09}. We may insist that no variable
occurs more than $k$ times in an instance, e.g.~\cite{DyGoJR10}.  We may
insist that the functions in $\F$ possess some particular property, such as
symmetry, e.g.~\cite{DyGoJe10a}. We do not consider these
restrictions in any detail here. However, we will make use of the following restricted version of \ncspw[\K] in Section~\ref{sec:binary}.

A unary function which must be applied \emph{exactly once} to each variable
$v\in V$ will be called a \emph{vertex weighting}, and its function values
\emph{vertex weights}. Thus, if $\lambda\colon D\to \K$ is a vertex weighting, any
instance $I$ must contain exactly one constraint of the form
$\cons{(v),\lambda}$ for each $v\in V\!$. Observe that it is not necessary to allow multiple vertex weightings $\lambda_1$, $\lambda_2$, \ldots, $\lambda_m$, since these can be combined into one equivalent vertex weighting $\lambda=\lambda_1\lambda_2\cdots\lambda_m$.

Our definition of vertex weights conforms to the use of
similar terminology elsewhere, for example in~\cite{DyeGre00}. We will denote
the problem with $\F\subset\fun{\K}$ and vertex weighting $\lambda\colon D\to\K$ by
\ncspf[\F;\lambda]. The problem \ncspf[\F;\lambda] is a restriction on the inputs to an associated \ncspw[\K] problem, \ncspf[\F\cup\set{\lambda}]. In an instance of \ncspf[\F\cup\set{\lambda}], $\cons{(v),\lambda}$ can appear any number of times, including zero, for each $v\in V$; in an instance of \ncspf[\F;\lambda], each $\cons{(v),\lambda}$ appears precisely once.

We will also consider \emph{approximate} evaluation of $Z_\F$, meaning
\emph{relative} approximation. Thus, given $\epsilon>0$ we wish to compute an estimate $\widehat{Z}_\F(I)$ of $Z_\F(I)$, for all $I$, such that
\begin{equation}\label{eq10}
  |\widehat{Z}_\F(I)-Z_\F(I)|\ \leq\ \epsilon\halfpt|Z_\F(I)|\,.
\end{equation}
For \emph{randomised} approximation, we require only that this holds
with sufficient probability. See~\cite{DyGoGJ04}, for example, for
further details. Observe that definition~\eqref{eq10} applies equally
if $Z_\F$ can take negative or complex values, though we consider only
nonnegative real weights, here.

\subsection{Weighted reductions}\label{sec:reduction}
Let $\Sigma$ be a finite alphabet, and let $F\colon\Sigma^*\to\algs$.
We are interested in evaluating $F$ only for strings $x$ that
encode instances $I$ of some computational problem. However, we will make
$F$ into a total function by setting $F(x)=0$ if $x\in\Sigma^*$ does not
encode an instance. In particular, $F(\varepsilon)=0$ for the empty
string $\varepsilon$.
\begin{defn}
Let $F_1,F_2\colon\Sigma^*\to\algs$. A \emph{weighted} reduction from
$F_1$ to $F_2$ is a pair of \fp-computable functions
$\phi\colon\Sigma^*\to\pralgs$,\, $\psi\colon\Sigma^*\to\Sigma^*$ such that
$F_1(x)=\phi(x)F_2(\psi(x))$ for all $x\in\Sigma^*\!$.
\end{defn}
In constructing a weighted reduction, we can clearly restrict
attention to strings $x$ that encode instances. Otherwise, we will simply
take $\phi(x)=1$, and $\psi(x)=\varepsilon$, the empty string.

Weighted reductions generalise the ``\emph{simulates}'' concept
defined in~\cite{DyGoJe09}. Parsimonious reductions~\cite{Simon75} are
contained as the special case $\phi(x)=1$ for all $x\in\Sigma^*\!$. Weighted
reduction relaxes the definition of parsimonious reduction by allowing a
positive ``weight'' $\phi(x)$ for each $x\in\Sigma^*\!$. The generalisation is
valuable in two respects. First, it preserves relative approximation of the
functions $F_1$ and $F_2$ and, hence retains the most useful property of
parsimonious reductions. If $\widehat{F}_2(x)$ is an approximation to $F_2(x)$ with relative error $\epsilon$, it follows easily that $\widehat{F}_1(x)=\phi(x)\widehat{F}_2(\psi(x))$ is an approximation to $F_1(x)$ with relative error $\epsilon$. Weighted reduction is, in fact, a simple type of \emph{AP-reduction}, as defined in~\cite{DyGoGJ04}.

Second, weighted reductions allow us to relax the cumbersome condition $F_1,
F_2\to\nnints$, required by parsimonious reductions, so we can work with the
natural classes of functions. All reductions used in this paper will be
weighted reductions.

We write $F_1\wred F_2$ to indicate the existence of a weighted reduction
from $F_1$ to $F_2$. If $F_1\wred F_2$ and $F_2\wred F_1$, we say that the
functions are \emph{equivalent} (under weighted reductions), and we write
$F_1\wequ F_2$. Thus, if $F_1\wequ F_2$, then $F_1$ and $F_2$ will have the same computational complexity for both exact and approximate
computation. This would not be true for approximate computation if we were to use the weaker notion of \emph{Turing} reducibility, as is usual for exact computation in the class \nump~\cite{Valian79}.

If $\F_1$ and $\F_2$ are two classes of functions such that, for all
$F_1\in\F_1$, there is an $F_2\in \F_2$ such that $F_1\wequ F_2$, and
conversely, for all $F_2\in\F_2$, there is an $F_1\in \F_1$ such that
$F_2\wequ F_1$, we will write $\F_1\wequ \F_2$.

The reason for making
this definition in terms of equivalence, rather than reduction, is
that, when $\F_1$ has a classification into functions of different
complexity, for example a \emph{dichotomy}, then this classification
is inherited by any $\F_2\wequ \F_1$. In our proofs below, we will
always have $\F_2\subset\F_1$, so proving that $\F_1\wequ \F_2$ will only
require showing that, for all $F_1\in\F_1$, there is an $F_2\in \F_2$
such that $F_1\wequ F_2$.

\section{Equivalence of \ncspw[\nnrats] and \ncsp}\label{sec:equivalence}

Under weighted reductions, we may assume that all instances of \ncspf have
every $v\in V$ appearing in the scope of some constraint. Otherwise, suppose
the variables in $V_0\subseteq V$ do not appear in the instance $I$, and let
$n_0=|V_0|$. Let $I'$ be identical to $I$ except that $V'=V\setminus V_0$.
Then $Z_\F(I)=|D|^{n_0} Z_\F(I')$, so there is an equivalent problem of the
required type using the reversible reduction $\phi(I)=|D|^{n_0}$ and
$\psi(I)=I'\!$. We will assume that this has been done, so all variables in $V$
appear in the scope of some constraint in $\C$.

Observe also that \emph{repeated} constraints are irrelevant in \ncsp, but not
in \ncspw when $\K\neq\B$. We may assume that instances of \ncspg do not have
repeated constraints, since otherwise there is trivial equivalence with this
case.

First, suppose $\F\subset\fun{\rats}$. Then, by computing a common
denominator $N\in\nats$ for the ranges of the functions in $\F\!$, we
can write $f'(\ba)=Nf(\ba)$, for each $f\in\F$ and we have
$f'(\ba)\in\ints$ for all $\ba\in D^{r(f)}$.  Let $\F'=\set{f':
  f\in\F}$.

\begin{lem}\label{lem05}
   If $\F'$ is obtained from $\F$ as above, then $\ncspf\wequ\ncspf[\F']$.
\end{lem}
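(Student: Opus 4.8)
The plan is to exhibit weighted reductions in both directions between \ncspf and \ncspf[\F'] that leave the combinatorial structure of an instance untouched and absorb the scaling factor $N$ into the multiplicative weight $\phi$. Since $\F'$ is obtained by replacing each $f\in\F$ with $f'=Nf$ for a fixed $N\in\nats$ (a constant, by the non-uniform convention), the map $f\mapsto f'$ is a bijection from $\F$ to $\F'$, injective because $N\neq 0$. This lifts to a bijection on instances: given an instance $I=(V,\C)$ of \ncspf, I would let $\psi(I)=I'$ be the instance with the same variable set $V$ and the same scopes, but with each $f_\kappa$ replaced by $f'_\kappa=Nf_\kappa$. This map is \fp-computable and invertible.

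Next I would compute the effect on the partition function. For any configuration $\sigma\colon V\to D$, writing $m=|\C|$, the weight in $I'$ factors as
\[
  \prod_{\kappa\in\C} f'_\kappa\big(\sigma(\bv_\kappa)\big)
  \;=\; \prod_{\kappa\in\C} N\halfpt f_\kappa\big(\sigma(\bv_\kappa)\big)
  \;=\; N^{m}\,\wt(\sigma),
\]
so summing over all $\sigma$ gives $Z_{\F'}(I')=N^{m}\halfpt Z_\F(I)$. This identity holds with multiplicity, so it is unaffected by repeated constraints (which matter for weighted problems) and by variables absent from all scopes. Setting $\phi(I)=N^{-m}$ then yields $Z_\F(I)=\phi(I)\halfpt Z_{\F'}(\psi(I))$, a weighted reduction $\ncspf\wred\ncspf[\F']$. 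For the converse I would use the inverse bijection $I'\mapsto I$ together with the weight $\phi(I')=N^{m}$, giving $Z_{\F'}(I')=\phi(I')\halfpt Z_\F(\psi(I'))$, that is, $\ncspf[\F']\wred\ncspf$.

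Finally I would check the conditions in the definition of weighted reduction. Each $\psi$ is \fp-computable since it merely relabels functions. The weight $\phi$ is also \fp-computable: $N$ is an instance-independent constant, $m\leq|I|$, so $N^{\pm m}$ has bitlength $O(m)$, polynomial in the instance size, and can be formed by repeated multiplication. Moreover $\phi$ takes values in \pralgs, since $N\geq 1$ makes $N^{\pm m}$ a positive rational and hence a positive real algebraic number. For strings not encoding instances I would set $\phi(x)=1$ and $\psi(x)=\varepsilon$, as permitted. Together these establish $\ncspf\wequ\ncspf[\F']$. There is no real obstacle here: the only point requiring attention is the \fp-computability of $\phi$, which rests entirely on $N$ being a constant under the non-uniform convention; granting that, the argument is a direct factorisation of the partition function.
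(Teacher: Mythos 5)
Your proof is correct and takes essentially the same approach as the paper's: both replace each constraint function $f_\kappa$ by $Nf_\kappa$ while leaving variables and scopes unchanged, derive $Z_{\F'}(I')=N^{k}\halfpt Z_\F(I)$ with $k=|\C|$, set $\phi(I)=N^{-k}$, and reverse the construction for the other direction. Your extra checks (\fp-computability of $\phi$ and $\psi$, positivity of $N^{\pm k}$, handling of non-instance strings) are sound elaborations of details the paper leaves implicit.
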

\begin{proof}
If $I$ is an instance of \ncspf, and $I'$ is the
corresponding instance of \ncspf[\F'], we have
$Z_\F(I)=N^{-k}Z_{\F'}(I')$, where $k=|\C(I)|$.
Thus, letting $\phi(I)=N^{-k}$ and $\psi(I)=I'\!$, there
is a weighted reduction from $Z_\F(I)$ to $Z_{\F'}(I')$, and hence
$\ncspf\wred\ncspf[\F']$. Reversing this reduction gives
$\ncspf\wequ\ncspf[\F']$.
\end{proof}
\begin{cor}
$\ncspw[\rats]\wequ \ncspw[\ints]$.
\end{cor}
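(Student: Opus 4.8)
The plan is to deduce the corollary directly from Lemma~\ref{lem05}, once the definition of equivalence between classes of functions is unpacked. First I would record the containment $\ncspw[\ints]\subset\ncspw[\rats]$: since $\ints\subset\rats$ we have $\fun{\ints}\subset\fun{\rats}$, so any finite $\F\subset\fun{\ints}$ is also a finite subset of $\fun{\rats}$ over the same domain, and hence every problem in $\ncspw[\ints]$ already lies in $\ncspw[\rats]$. Invoking the convention for class equivalence (where $\F_2\subset\F_1$ reduces the task to a one-sided statement), it therefore suffices to show that each problem in $\ncspw[\rats]$ is equivalent, under weighted reduction, to some problem in $\ncspw[\ints]$.

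Next I would take an arbitrary $\ncspf\in\ncspw[\rats]$, so that $\F\subset\fun{\rats}$ is finite with domain $D$ of some size $q\geq 2$, and apply the construction preceding Lemma~\ref{lem05}. That is, I would compute a common denominator $N\in\nats$ for the finitely many rational values taken by the functions in $\F$, and form $\F'=\set{f':f\in\F}$ with $f'(\ba)=Nf(\ba)$ for each $\ba\in D^{r(f)}$. Since $N$ clears all denominators, every $f'$ maps into $\ints$, so $\F'\subset\fun{\ints}$ is a finite set over the same domain $D$; thus $\ncspf[\F']$ lies in $\ncspqw{\ints}$, and in particular in $\ncspw[\ints]$. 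Lemma~\ref{lem05} then yields $\ncspf\wequ\ncspf[\F']$, which is exactly the required equivalence, completing the argument.

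I do not expect any genuine obstacle here: the corollary is essentially Lemma~\ref{lem05} quantified over all finite rational function sets, combined with the trivial inclusion of the integer-weighted class inside the rational-weighted one. The only points worth checking are that scaling leaves the domain $D$ (and hence the arity of each function and the value of $q$) unchanged, so that $\ncspf[\F']$ stays inside $\ncspqw{\ints}$ for the same $q$, and that a single $N$ serves simultaneously for all of $\F$, which holds because $\F$ is finite and each $f\in\F$ has finite range.
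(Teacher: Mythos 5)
Your proposal is correct and follows exactly the paper's route: the paper's own proof is the one-line observation that the corollary follows from $\ints\subset\rats$ and Lemma~\ref{lem05}, and your write-up simply unpacks that same argument (the trivial inclusion plus the scaling by a common denominator $N$) in full detail. Nothing further is needed.
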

\begin{proof}
Since $\ints\subset\rats$, this follows immediately from Lemma~\ref{lem05}.
\end{proof}
Now, given $\F\subset\fun{\nnints}$, we will construct a set of relations
$\Gamm(\F)$, with domain $A$, as follows. For each function $f\in\F$ and each
$\ba\in D^r\!$, where $r=r(f)$, we create a set $\D_{f,\ba}$ of cardinality
$f(\ba)$ such that these sets are all mutually disjoint, and also disjoint
from $D$. Let
\[A=D\ \cup\,\bigcup_{f\in\F}\bigcup_{\ba\in D^r}
\D_{f,\ba},\quad\mathrm{so}\quad
|A|=|D|+\sum_{f\in\F}\sum_{\ba\in D^r}f(\ba)\,.\]
Now, for every $f\in\F\!$, we construct a relation $R(f)\subseteq A^{r+1}\!$,
as follows. For each $r$-tuple $\ba$ with $f(\ba)>0$, we create an
$(r+1)$-tuple $(\ba,w)\in R(f)$ for every $w\in\D_{f,\ba}$.
\begin{obs}\label{obs10}
All tuples $(\ba,w)\in R(f)$ have $\ba\in D^r$ and $w\notin D$.
\end{obs}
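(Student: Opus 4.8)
The plan is to verify both assertions directly from the construction of $R(f)$, since the observation is purely a bookkeeping consequence of the way the relation was defined. I would begin by recalling that every tuple placed into $R(f)$ arises in exactly one way: for some $r$-tuple $\ba$ with $f(\ba)>0$ and some $w\in\D_{f,\ba}$, we inserted $(\ba,w)$ into $R(f)$. It therefore suffices to examine an arbitrary tuple of this form and read off the two claimed properties.

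For the first claim, I would observe that $\ba$ ranges precisely over the domain of $f$, which is $D^r$ by the definition of $\fun{\nnints}$. Hence the first $r$ coordinates of $(\ba,w)$ constitute a tuple in $D^r$, as required. For the second claim, I would use the fact that $w$ was selected from $\D_{f,\ba}$, and that the sets $\D_{f,\ba}$ were constructed to be mutually disjoint and, in particular, disjoint from $D$. Consequently $w\notin D$, which is exactly the second assertion.

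Since both parts follow immediately by unwinding the definitions, I do not expect any genuine obstacle here; the only point requiring care is to confirm that the disjointness stipulation imposed on the $\D_{f,\ba}$ at the moment of their creation (namely, their disjointness from $D$) is precisely what forces the final coordinate to lie outside $D$. I anticipate that this observation is recorded mainly so that later arguments can unambiguously recover $\ba$ and the witness $w$ from a given tuple of $R(f)$, with the guarantee that the two pieces can never be confused by virtue of $w$ living outside $D$.
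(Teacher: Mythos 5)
Your proof is correct and follows exactly the route the paper intends: the paper states this as an unproved observation precisely because, as you show, both claims are immediate from the construction of $R(f)$ (the first coordinates $\ba$ range over $D^r$ by definition, and $w$ is drawn from $\D_{f,\ba}$, which was stipulated to be disjoint from $D$). Your closing remark about why the observation is recorded---so that $\ba$ and $w$ can be unambiguously recovered from a tuple---correctly anticipates its role in the proof of Lemma~\ref{lem10}.
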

\begin{obs}\label{obs20}
For each $w\in A\setminus D$, there is a unique $f\in\F$ and $\ba\in
D^r$ such that $(\ba,w)\in R(f)$.
\end{obs}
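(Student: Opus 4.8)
The plan is to deduce both halves of the statement directly from the two disjointness conditions built into the construction: that the sets $\D_{f,\ba}$ are mutually disjoint, and that each is disjoint from $D$. The first step is to record that, because every $\D_{f,\ba}$ avoids $D$, the difference $A\setminus D$ coincides exactly with $\bigcup_{f\in\F}\bigcup_{\ba\in D^r}\D_{f,\ba}$. Hence any fixed $w\in A\setminus D$ belongs to $\D_{f,\ba}$ for at least one pair $(f,\ba)$. For such a pair the set $\D_{f,\ba}$ is nonempty, so its cardinality $f(\ba)$ is strictly positive; by the defining rule for $R(f)$, this means the tuple $(\ba,w)$ is indeed placed in $R(f)$. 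This establishes the existence part.

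For uniqueness I would read the definition of $R(f)$ in the reverse direction. By construction, the only $(r+1)$-tuples whose final coordinate equals $w$ that can occur in any $R(f')$ are those of the form $(\ba',w)$ with $w\in\D_{f',\ba'}$. Thus if $(\ba,w)\in R(f)$ and $(\ba',w)\in R(f')$, then $w\in\D_{f,\ba}\cap\D_{f',\ba'}$, and the mutual disjointness of the family $\set{\D_{f,\ba}}$ forces $(f,\ba)=(f',\ba')$. In particular both $f$ and $\ba$ are determined by $w$, which is precisely the uniqueness asserted.

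The nearest thing to an obstacle is the biconditional linking membership of $(\ba,w)$ in $R(f)$ to membership of $w$ in $\D_{f,\ba}$. The construction populates $R(f)$ only with tuples $(\ba,w)$ for which $f(\ba)>0$ and $w\in\D_{f,\ba}$, so I must check that no stray tuple with last coordinate $w$ arises from any other source; this is immediate from the definition of $R(f)$, and Observation~\ref{obs10} already confirms that every such tuple has $\ba\in D^r$ and $w\notin D$. With that biconditional in hand, the argument is pure bookkeeping with a family of disjoint sets, and no further difficulty arises.
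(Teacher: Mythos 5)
Your proof is correct and is exactly the argument the paper leaves implicit: the statement appears there as an unproved Observation, immediate from the facts that $A\setminus D$ is the disjoint union of the sets $\D_{f,\ba}$ and that $R(f)$ contains precisely the tuples $(\ba,w)$ with $w\in\D_{f,\ba}$ (and $f(\ba)=|\D_{f,\ba}|>0$ whenever such a $w$ exists). Your write-up spells out this bookkeeping faithfully, including the biconditional needed for uniqueness, so nothing further is required.
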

We use these observations to prove the following equivalence.
\begin{lem}\label{lem10}
If $\Gamm=\Gamm(\F)$, as defined as above, then $\ncspf\wequ\ncspg$.
\end{lem}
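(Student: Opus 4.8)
The plan is to exhibit weighted reductions in both directions; each will in fact be parsimonious ($\phi\equiv1$) except on a set of structurally degenerate instances that is recognisable in polynomial time, which I send to the empty string $\varepsilon$, exploiting the convention $F(\varepsilon)=0$. \textbf{From \ncspf to \ncspg.} Given an instance $I=(V,\C)$ of \ncspf, I would build an instance $I'$ of \ncspg over the domain $A$ by keeping the variables $V$ and, for each constraint $\kappa=\cons{\bv_\kappa,f_\kappa}$, introducing a fresh \emph{witness} variable $w_\kappa$ and replacing $\kappa$ by $\cons{(\bv_\kappa,w_\kappa),R(f_\kappa)}$. In any configuration $\tau$ of nonzero weight, Observation~\ref{obs10} forces $\tau(\bv_\kappa)\in D^{r_\kappa}$ for every $\kappa$; as every variable of $V$ lies in some scope, $\tau$ restricts to a genuine $\sigma\colon V\to D$. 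Fixing such a $\sigma$ and writing $\ba=\sigma(\bv_\kappa)$, the witness $w_\kappa$ (which occurs in exactly one constraint) ranges precisely over $\D_{f_\kappa,\ba}$, contributing a factor $|\D_{f_\kappa,\ba}|=f_\kappa(\ba)$. Multiplying over $\kappa$ and summing over $\sigma$ gives $Z_\Gamma(I')=Z_\F(I)$, so $\ncspf\wred\ncspg$ with $\psi(I)=I'$.

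\textbf{From \ncspg to \ncspf.} Given $J=(U,\C)$ over $A$, I first classify each variable by whether it occurs in one of the first $r$ coordinates of some constraint (a \emph{$D$-position}) or in the last coordinate (a \emph{witness position}). By Observation~\ref{obs10} a variable occurring in both kinds of position is simultaneously forced into $D$ and into $A\setminus D$, so $Z_\Gamma(J)=0$; likewise, if some witness variable occurs in two constraints carrying different relations $R(f)\neq R(g)$, then, the families $\D_{f,\cdot}$ and $\D_{g,\cdot}$ being disjoint, no value is admissible and $Z_\Gamma(J)=0$. Both conditions are checkable in polynomial time, and in either case I set $\psi(J)=\varepsilon$.

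In the remaining case $U=U_D\sqcup U_W$, and each witness $w\in U_W$ occurs only in constraints $\cons{(\bu^{(i)}_w,w),R(f_w)}$ sharing a single relation $R(f_w)$. By Observation~\ref{obs20} a value for $w$ lies in at most one $\D_{f_w,\ba}$, so for a fixed assignment of $U_D$ the number of admissible values for $w$ equals $f_w(\ba)$ when all the tuples $\bu^{(i)}_w$ evaluate to a common $\ba$, and $0$ otherwise. I encode this by forming the equivalence relation $\sim$ on $U_D$ generated by identifying, for each $w$ and each coordinate $j$, the variables in position $j$ of the various $\bu^{(i)}_w$; then $I'$ has variable set $U_D/{\sim}$ and one constraint $\cons{\overline{\bu}_w,f_w}$ per witness $w$, where $\overline{\bu}_w$ is the common image of the $\bu^{(i)}_w$. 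Assignments of $U_D$ respecting $\sim$ biject with configurations of $I'$, and on these the product of the witness-counts is exactly the $\ncspf$ weight, while assignments violating $\sim$ contribute $0$ on both sides. Hence $Z_\F(I')=Z_\Gamma(J)$, giving $\ncspg\wred\ncspf$, and combining the two directions yields $\ncspf\wequ\ncspg$.

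\textbf{Main obstacle.} The delicate point lies entirely in the reverse reduction: a witness variable may appear in several constraints, so I cannot simply read off one $\F$-constraint per $\Gamma$-constraint. Handling this correctly requires the uniqueness of Observation~\ref{obs20} (a witness can certify only a single pair $(f,\ba)$) together with the variable-merging that enforces the implied coordinate equalities, and care to route the structurally infeasible assignments to $\varepsilon$ rather than to a nonzero $\ncspf$ instance, since the functions of $\F$ need not take the value $0$ and so $Z_\F$ on a nonempty instance may be unable to represent $0$.
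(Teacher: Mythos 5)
Your proposal is correct and takes essentially the same route as the paper's proof: the forward reduction attaches a fresh witness variable to each constraint, and the reverse reduction routes detectably-zero instances to $\varepsilon$, merges the $D$-position variables forced equal by a shared witness (your equivalence relation $\sim$ is exactly the paper's ``add the equalities $\bv'=\bv$ and delete the constraint $\kappa'$'' step), and reads off one $\F$-constraint per witness, using Observations~\ref{obs10} and~\ref{obs20} in the same roles. If anything, you are slightly more explicit than the paper in isolating the case of a witness shared by constraints with distinct relations and sending it to $\varepsilon$.
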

\begin{proof}
Suppose $I$ is an instance of $\ncspf$. For each constraint
$\kappa=\cons{\bv_\kappa,f_\kappa}$, we create the constraint
$\kappa'=\cons{(\bv_\kappa,v_\kappa),R(f_\kappa)}$ in an instance
$I'$ of \ncspg, where $v_\kappa$ is a new variable.
Thus $I'$ has variable set $V'=V\cup\set{v_\kappa:\kappa\in\C}$.
Now, each configuration $\sigma\colon V\to D$ in $I$ can be identified
with the set of configurations
$\sigma'\colon V'\to A$ in $I'$ that agree with $\sigma$ over $V\!$.
Thus $\sigma'(\bv_\kappa)=\sigma(\bv_\kappa)$
and $\sigma'(v_\kappa)\in \D_{f_\kappa,\sigma(\bv_\kappa)}$. By
Observation~\ref{obs20}, these partition the set of all $\sigma'$
having nonzero weight. Since there are exactly $f(\sigma(\bv_\kappa))$
choices for $\sigma'(v_\kappa)$, we have $Z_\F(I)=Z_\Gamm(I')$.
We take $\phi(I)=1$, $\psi(I)=I'$ and hence we have $\ncspf\wred\ncspg$.

Conversely, let $I$ be an instance of \ncspg, and let
$\kappa=\cons{(\bv,v),R(f)}$ be any constraint. If $v$ also appears in the
tuple $\bv'$ of a constraint $\kappa'=\cons{(\bv'\!,v'),R(f')}$, then there can
be no configuration $\sigma\colon V\to A$ with nonzero weight, by Observation~\ref{obs10}. Thus
$Z_\Gamm(I)=0$, so we take $\phi(I)=1$ and $\psi(I)=\varepsilon$. Now, if $v$
appears other than in constraint $\kappa$, it must be in a constraint
$\kappa'=\cons{(\bv'\!,v),R(f')}$. But then, from Observation~\ref{obs20}, any
$\sigma\colon V\to A$ has nonzero weight only if $\sigma(\bv')=\sigma(\bv)$ and
$f'=f$. Thus we may add the equalities $\bv'=\bv$ and delete the constraint
$\kappa'\!$. Repeating this procedure, we construct an instance $I_0$ of \ncspg
such that $Z_\Gamm(I_0)=Z_\Gamm(I)$, and each constraint
$\kappa=\cons{(\bv,v),R(f)}$ in the constraint set $\C_0$ of the instance
$I_0$ has a unique variable $v=v_\kappa$. Thus $I_0$ is precisely the
instance of \ncspg which would result from applying the construction in the
first part of the proof to the instance $I'_0$ of \ncspf with variables
$V'=V\setminus\set{v_\kappa:\kappa\in\C_0}$ and constraints
$\cons{\bv_\kappa,f_\kappa}$ ($\kappa\in\C_0$). It follows that
$Z_\Gamm(I)=Z_\Gamm(I_0)=Z_\F(I')$. So we may take $\phi(I)=1$, $\psi(I)=I'$
and hence we have $\ncspg\wred\ncspf$.
\end{proof}
\begin{rem}\label{rem10}
The reader will note that the size of the resulting unweighted
problem increases dramatically with the size of the weights. Since these
weights are constants in the non-uniform model, this has no impact on the
complexity. However, we make no claims for the practicality of the reduction.
\end{rem}
\begin{thm}\label{thm10}
$\ncspw[\nnrats]\wequ\ncsp$.
\end{thm}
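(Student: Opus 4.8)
The plan is to assemble the chain of equivalences that have already been established as lemmas and a corollary, and to close the remaining gap between $\ncspw[\nnrats]$ and $\ncspw[\nnints]$. Reading off the excerpt, we have $\ncspw[\rats]\wequ\ncspw[\ints]$ (the Corollary, via Lemma~\ref{lem05}), and for each $\F\subset\fun{\nnints}$ we have $\ncspf\wequ\ncspg$ with $\Gamm=\Gamm(\F)$ a set of Boolean relations (Lemma~\ref{lem10}). Since $\ncsp$ is by definition $\ncspw[\B]$ and the relations $\Gamm(\F)$ are relations (functions into $\B$) on the enlarged domain $A$, Lemma~\ref{lem10} is precisely what converts a nonnegative-integer-weighted problem into an unweighted one. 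So the skeleton is $\ncspw[\nnrats]\wequ\ncspw[\nnints]\wequ\ncsp$.

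First I would restrict to nonnegative weights throughout and re-run the argument of Lemma~\ref{lem05} in its nonnegative form: given $\F\subset\fun{\nnrats}$, clearing a common denominator $N\in\nats$ produces $\F'\subset\fun{\nnints}$ with $\ncspf\wequ\ncspf[\F']$. The reduction constants are $\phi(I)=N^{-k}$ and $\phi(I')=N^{k}$ with $k=|\C(I)|$, both positive, so they are legitimate weighted reductions, and nonnegativity is preserved because multiplying nonnegative rationals by a positive integer keeps them nonnegative. This gives the class-level statement $\ncspw[\nnrats]\wequ\ncspw[\nnints]$. Then, applying Lemma~\ref{lem10} to each $\F\subset\fun{\nnints}$ yields a $\Gamm(\F)$ with $\ncspf[\F']\wequ\ncspg$, and since each such $\Gamm(\F)$ is a set of relations, every resulting problem lies in $\ncsp$; this establishes $\ncspw[\nnints]\wequ\ncsp$.

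The one genuinely non-routine point is that the equivalence relation $\wequ$ on classes must be verified in both directions to match the class-level definition given in Section~\ref{sec:reduction}. For the forward direction (every $F_1\in\ncspw[\nnrats]$ has an equivalent $F_2\in\ncsp$) the composition of the two equivalences above suffices, using transitivity of $\wequ$, which follows immediately by composing the $\phi$'s and $\psi$'s. For the converse direction, however, I would appeal to the remark at the end of Section~\ref{sec:reduction}: since in all our reductions $\F_2\subset\F_1$ and, more to the point, $\ncsp=\ncspw[\B]\subseteq\ncspw[\nnrats]$ because $\B\subset\nnrats$, every unweighted problem is literally already a nonnegative-rational-weighted problem and is trivially equivalent to itself. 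Thus the converse containment is free, and only the forward reductions carry content.

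The step I expect to require the most care is the bookkeeping in the chain of transitive compositions, specifically checking that the intermediate constants stay in $\pralgs$ (strictly positive) as the definition of weighted reduction demands, rather than merely nonnegative. In Lemma~\ref{lem05} the factor $N^{\pm k}$ is strictly positive, and in Lemma~\ref{lem10} we have $\phi(I)=1$ throughout, so the product of the reduction weights along the chain remains strictly positive; I would state this explicitly to confirm that the composed maps are honest weighted reductions and that the equivalence $\ncspw[\nnrats]\wequ\ncsp$ therefore holds for both exact and approximate computation. Beyond that, the proof is essentially a one-line invocation of the preceding corollary and lemma, so I anticipate no substantial obstacle.
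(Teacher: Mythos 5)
Your proposal is correct and takes essentially the same route as the paper: its one-line proof of this theorem is precisely your chain, combining Lemma~\ref{lem05} (clearing denominators, which preserves nonnegativity) with Lemma~\ref{lem10}, and obtaining the converse direction for free from $\B\subset\nnrats$ via the convention of Section~\ref{sec:reduction} that only the forward reductions need content. You merely spell out the bookkeeping (strict positivity of the constants, transitivity of $\wequ$) that the paper leaves implicit.
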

\begin{proof}
This follows directly from $\B\subset\nnrats$ and Lemma~\ref{lem10}.
\end{proof}
As noted above, Bulatov~\cite{Bulato08} has shown a dichotomy for
\ncsp into problems which are in \fp and problems which are \numpc
(see also~\cite{DyeRic10}). Combining this with Theorem~\ref{thm10},
and an argument given in  Section~1.3 of~\cite{DyGoJe09},
we have the following.
\begin{thm}[{\rm Dichotomy}]\label{thm40}
Any problem in \ncspw[\nnrats] is either in \fp or is complete for
$\fpnump\!$.
\end{thm}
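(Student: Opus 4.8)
The plan is to combine the two ingredients that the excerpt has already assembled: Bulatov's dichotomy for unweighted \ncsp, and the equivalence $\ncspw[\nnrats]\wequ\ncsp$ established in Theorem~\ref{thm10}. Since weighted reductions preserve exact computational complexity (indeed they are a special case of AP-reductions, and for exact counting they are certainly Turing reductions), two equivalent problems lie in the same complexity class. So I would argue that each individual problem $\ncspf[\F]\in\ncspw[\nnrats]$ is equivalent, via Theorem~\ref{thm10} and the way $\wequ$ is defined on classes of functions, to some unweighted problem $\ncsp(\Gamm)$, to which Bulatov's trichotomy-free dichotomy applies.

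The key steps, in order, are as follows. First I would fix an arbitrary $\F\subset\fun{\nnrats}$ and recall that Lemma~\ref{lem05} lets us clear denominators to reduce to the case $\F\subset\fun{\nnints}$, and then Lemma~\ref{lem10} produces a relation set $\Gamm(\F)$ with $\ncspf[\F]\wequ\ncsp(\Gamm(\F))$. Second, I would invoke Bulatov's dichotomy: the problem $\ncsp(\Gamm(\F))$ is either in \fp or is \numpc. Third, I would transfer this classification back across the equivalence $\wequ$. The forward direction ($\ncsp(\Gamm)$ easy $\Rightarrow$ $\ncspf[\F]$ easy) is immediate since an \fp-computable $\psi$ composed with an \fp algorithm for $Z_\Gamm$, multiplied by the \fp-computable weight $\phi$, stays in \fp. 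The hardness direction requires identifying the completeness class on the weighted side, which is where the statement departs slightly from the unweighted one.

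The main obstacle, and the reason the theorem asserts completeness for $\fpnump$ rather than \numpc, is that \numpc is a notion for integer-valued (or $\set{0,1}$-valued) functions, whereas $Z_\F$ for $\F\subset\fun{\nnrats}$ takes rational values and so does not literally lie in \nump. The resolution is exactly the argument alluded to from Section~1.3 of~\cite{DyGoJe09}: rational-valued counting functions naturally live in the class \fpnump, and the appropriate hardness notion is \fpnumpc under Turing (or here, weighted) reductions. I would therefore state that when $\ncsp(\Gamm(\F))$ is \numpc, the problem $Z_\Gamma$ is \fpnumpc as a function, and the weighted equivalence $\ncspf[\F]\wequ\ncsp(\Gamma(\F))$ carries this completeness back to $Z_\F$, giving $\fpnump$-completeness. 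One should also note that \fp is closed under weighted reductions in the sense needed for the easy side, so the two cases are genuinely exhaustive and mutually exclusive, matching the dichotomy.

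Since the reductions of Lemmas~\ref{lem05} and~\ref{lem10} are already fully established, and Bulatov's dichotomy is cited as a black box, the proof is essentially a bookkeeping argument tracking which complexity class each side belongs to; I expect no substantive calculation beyond carefully phrasing the \fpnump-completeness transfer.
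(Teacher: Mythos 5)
Your proposal is correct and follows essentially the same route as the paper, whose proof is exactly the combination of Bulatov's dichotomy for \ncsp, the equivalence $\ncspw[\nnrats]\wequ\ncsp$ from Theorem~\ref{thm10} (i.e.\ Lemmas~\ref{lem05} and~\ref{lem10}), and the argument from Section~1.3 of~\cite{DyGoJe09} handling the passage from \numpc to \fpnumpc for rational-valued partition functions. Your more detailed bookkeeping of the easy and hard directions across the weighted equivalence is a faithful expansion of what the paper leaves implicit.
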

\begin{rem}\label{rem20}
The method of proof used in Theorem~\ref{thm10} clearly fails for irrational weights. However, since this paper was written, Cai, Chen and Lu~\cite{CaChLu11} have proved a general dichotomy theorem for weights in \nnralgs.

\end{rem}
\begin{rem}\label{rem25}
Theorem~\ref{thm10} may have analogues for mixed-sign and complex
weights. However, the above method of proof encounters technical problems
with repeated constraints in these cases.
\end{rem}

\section{Reduction to a single function}\label{sec:single}
Here we consider $\F\subset\fun{\nnralgs}$.
We will show that \ncspf is equivalent to \ncspf[\set{g}] for a single
function $g\in\fun{\nnralgs}$. We abbreviate \ncspf[\set{g}] to \ncspg[g].

We may assume that no $f\in \F$ is identically zero.  Otherwise,
if $f(\ba) = 0$ for all $\ba\in D^{r(f)}\!$, then $Z_\F(I)=0$ for any
instance $I$ of \ncspf where $f$ appears in a constraint.  Then, letting $\F'=\F\setminus\{f\}$, $Z_\F(I) = Z_{\F'}(I)$, we have $\ncspf\wred\ncspf[\F']$.

Let
\[M(f) = \sum_{\ba\in D^r} f(\ba)\ >\ 0\qquad(f\in\F)\,.\]
Now, let $\ell=|\F|$, and let $\F=\set{g_1,g_2,\ldots,g_\ell}$,
$r_j=r(g_j)$ and $M_j=M(g_j)$ ($j\in[\ell]$). Let $s=\sum_{j=1}^\ell
r_j$ and define $g\colon D^s\to\nnralgs$ by
\[g(\ba_1,\ba_2,\ldots,\ba_\ell)=\prod_{j=1}^\ell g_j(\ba_j)
\qquad(\ba_j\in D^{r_j};\,j\in[\ell])\,.\]
If $\kappa=\cons{\bv_\kappa,f_\kappa}$ is a constraint of an
instance $I$ of \ncspf, let $i_\kappa$ be defined by $i_\kappa=j$ if
$f_\kappa=g_j$.
\begin{thm}\label{thm50}
For all $\F\subset\fun{\nnralgs}$, there exists $g\in\fun{\nnralgs}$
such that $\ncspf\wequ\ncspf[g]$.
\end{thm}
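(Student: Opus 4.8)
The plan is to establish the equivalence by exhibiting weighted reductions in both directions between $\ncspf$ and $\ncspf[g]$, where $g$ is the product function defined above. The reverse direction $\ncspf[g]\wred\ncspf$ is essentially free, so I would dispatch it first; the forward direction $\ncspf\wred\ncspf[g]$, where I must account for the "cross-terms" created by the product structure of $g$, is where the actual work lies.

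For $\ncspf[g]\wred\ncspf$, given an instance $I$ of $\ncspf[g]$ I would replace each constraint $\cons{(\bv_1,\ldots,\bv_\ell),g}$, in which $\bv_j$ is a scope of length $r_j$, by the $\ell$ constraints $\cons{\bv_j,g_j}$ for $j\in[\ell]$, calling the result $I'$. Since $g(\ba_1,\ldots,\ba_\ell)=\prod_{j=1}^{\ell}g_j(\ba_j)$, every configuration has exactly the same weight in $I'$ as in $I$, so $Z_g(I)=Z_\F(I')$ and I may take $\phi(I)=1$, $\psi(I)=I'$. No summing or rescaling is needed here.

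For the forward direction I would simulate each individual constraint of $I$ by a single $g$-constraint padded with fresh variables. Concretely, for a constraint $\kappa=\cons{\bv_\kappa,f_\kappa}$ with $f_\kappa=g_{i_\kappa}$, I build a $g$-constraint whose scope places $\bv_\kappa$ in block $i_\kappa$ and uses fresh, private variables in every other block $i\neq i_\kappa$. The key computation is that summing over all assignments to these fresh variables multiplies the contribution of $\kappa$ by $\prod_{i\neq i_\kappa}M_i$: each free block $i$ ranges independently over $D^{r_i}$ and contributes $\sum_{\ba\in D^{r_i}}g_i(\ba)=M_i$, while block $i_\kappa$ contributes exactly $g_{i_\kappa}(\sigma(\bv_\kappa))$. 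Because the fresh variables are private to each constraint, these factors collect over all $\kappa$, yielding
\[
Z_g(I')=\Big(\prod_{\kappa\in\C}\prod_{i\neq i_\kappa}M_i\Big)\,Z_\F(I).
\]
I would then set $\psi(I)=I'$ and let $\phi(I)$ be the reciprocal of this product; since each $M_i>0$ (using the standing assumption that no $g_i$ is identically zero), $\phi(I)\in\pralgs$ and is \fp-computable, so this gives $\ncspf\wred\ncspf[g]$.

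The main obstacle — really the only nontrivial point — is verifying the summing-out step cleanly, in particular that assigning disjoint sets of fresh variables to distinct constraints keeps the cross-term factors mutually independent, so that they factor into the single product displayed above rather than interacting. I would also confirm that both reductions behave correctly in the presence of \emph{repeated} constraints, which matter here because $\K=\nnralgs\neq\B$: the forward construction never requires the constraints of $I'$ to be distinct, and the reverse decomposition preserves multiplicities automatically, so no special handling is needed. Combining the two reductions then yields $\ncspf\wequ\ncspf[g]$.
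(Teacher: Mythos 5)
Your proposal is correct and follows essentially the same route as the paper's own proof: the same product function $g$, the same forward reduction padding each constraint with private fresh variables in the blocks $j\neq i_\kappa$ and rescaling by $\phi(I)=1/\chi(I)$ with $\chi(I)=\prod_{\kappa\in\C}\prod_{j\neq i_\kappa}M_j>0$, and the same reverse reduction splitting each $g$-constraint into its $\ell$ factor constraints with $\phi(I)=1$. Your extra remarks on the independence of the summed-out blocks and on repeated constraints are sound and only make explicit points the paper leaves implicit.
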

\begin{proof}
The required $g$ is the function $g(\F)$ constructed above.
For any instance $I$ of \ncspf, construct an instance $I'=\psi(I)$
of $\ncspf[g]$ by padding each constraint $\kappa=\cons{\bv_\kappa,f_\kappa}$
that has $f_\kappa=g_{i_\kappa}$, to give
\[\kappa'\ =\
\cons{\big(\bu_{1,\kappa},\ldots,\bu_{i_\kappa-1,\kappa},\bv_\kappa,
\bu_{i_\kappa+1,\kappa},\ldots,\bu_{\ell,\kappa}\big) , g }\,,\]
where $\bu_{j,\kappa}$ $( j \in[\ell],j\neq i_\kappa)$ is an $r_j$-tuple of
new variables not in $V\!$, and disjoint for each $j\neq i_\kappa$ and
$\kappa\in\C$. Thus $I'$ has variable set $V'\!$, with
\[ |V'|\ =\ |V|+\sum_{\kappa\in\C}\,\sum_{j\neq i_\kappa} r_j\ \leq\
s|\C|\,.\]
Any $\sigma'\colon V'\to D$ decomposes into $\sigma\colon V\to D$ and
$\sigma_{j,\kappa}\colon\,\bu_{j,\kappa}\to D$ ($j\neq i_\kappa,
\kappa\in\C$). Clearly, for each value of $j$ and $\kappa$,
\[ \sum_{\sigma_{j,\kappa}}\, g_j\big(\sigma_{j,\kappa}(\bu_j)\big)\
=\ \sum_{\ba\in D^{r_j}} g_j(\ba)\ =\ M_j\ >\ 0\,.\]
Thus, it follows that
\[Z_g(I')\ =\  \chi(I)Z_\F(I),\quad\mathrm{where}\ \  \chi(I)\,
=\,\prod_{\kappa\in\C}\,\prod_{j\neq i_\kappa} M_j\,>\, 0\,,\]
which gives a weighted reduction from \ncspf to  $\ncspf[g]$
with $\phi(I)=1/\chi(I)$.

In the other direction, the reduction is straightforward. Suppose
$\kappa=\cons{\bv_\kappa,g}$ is a constraint of an arbitrary
instance $I$ of \ncspf[g], where
$\bv_\kappa = (\bv_{1,\kappa},\ldots,\bv_{\ell,\kappa})$,
with $\bv_{j,\kappa}\in V^{r_j}$ ($j\in[\ell]$). Create the instance
$I'=\psi(I)$ of \ncspf with constraints
$\C'=\set{\cons{\bv_{j,\kappa},f_j}:
j\in[\ell],\, \kappa\in\C}$.
Clearly, $Z_g(I)=Z_\F(I')$, so we have a
weighted reduction from $\ncspf[g]$ to \ncspf with $\phi(I)=1$.
\end{proof}
\begin{rem}\label{rem30}
Theorem~\ref{thm50} does not appear to carry over to negative or
complex weights. The proof above fails because we may have $M(f)=0$,
so $\chi(I)=0$, and hence $\phi(I)$ will be undefined.
\end{rem}
The important features of the equivalence of Theorem~\ref{thm50} are
\begin{enumerate}
  \item[(i)] it does not change the domain $D$;
  \item[(ii)] it preserves approximation, since the reductions are weighted;
  \item[(iii)] it preserves relations, since $\B$ is closed under product;
  \item[(iv)] it preserves the maximum number of occurrences (\emph{degree})
  of variables.
\end{enumerate}
Thus, for most complexity studies, allowing multiple functions or
relations in \ncsp does not increase generality. Theorem~\ref{thm50}
can be used to simplify proofs given, for example,
in~\cite{Bulato08,CreHer96,DyGoJe09,DyGoJe10,DyGoJR10,DyeRic10}.

\section{Reduction to binary constraints}\label{sec:binary}
The proof of equivalence of  \ncspw[\nnrats] and \ncsp in
Section~\ref{sec:equivalence} is probably the simplest, but not the only
construction. We present a different proof here, which is of interest in its
own right. An instance of \ncspf is reduced to an instance \ncspg, where
\Gamm is a set of \emph{binary} relations. Thus any problem in \ncsp can be
stated as an equivalent problem concerning \emph{digraphs}.

We give the proof in two parts. In the first part, we show equivalence of any
problem in \ncspw[\nnralgs] with a problem having a vertex weighting and a
set of binary \emph{relations}. We will then show that this vertex-weighted
problem is equivalent to an unweighted digraph problem.

\begin{thm}\label{thm60}
If $\F$ is a finite subset of $\fun{\nnralgs}$, then $\ncspf \wequ\ncspf[\CB;\lambda]$, where
$\CB$ is a finite set of binary relations and $\lambda\colon D\to\nnralgs$ is a vertex weighting.
\end{thm}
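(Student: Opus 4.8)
The plan is to first apply Theorem~\ref{thm50} to replace $\F$ by a single function $g=g(\F)\in\fun{\nnralgs}$, of arity $s$ say, so that it suffices to prove $\ncspf[g]\wequ\ncspf[\CB;\lambda]$. The binary problem will be set over the enlarged domain $A=D\cup T$, where $T=\set{t_\ba:\ba\in D^s,\,g(\ba)>0}$ consists of one fresh element per tuple on which $g$ is positive; this $A$ is the domain referred to in the statement. I would take $\CB=\set{B_1,\ldots,B_s}$ with $B_i=\set{(t_\ba,a_i):t_\ba\in T}$, the relation that pins a tuple-element to its $i$th coordinate, and the vertex weighting $\lambda\colon A\to\nnralgs$ given by $\lambda(d)=1$ for $d\in D$ and $\lambda(t_\ba)=g(\ba)$ for $t_\ba\in T$. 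Here $\CB$ is finite; since the $B_i$ are relations, repeated binary constraints are idempotent and hence harmless; and, exactly as at the start of Section~\ref{sec:equivalence}, any variable lying in no constraint may be removed at the price of an explicit scalar factor in $\phi$.

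For the forward reduction $\ncspf[g]\wred\ncspf[\CB;\lambda]$, from an instance $I$ of $\ncspf[g]$ I would build $I'$ by adjoining, for each constraint $\kappa=\cons{\bv_\kappa,g}$ with $\bv_\kappa=(v_{\kappa,1},\ldots,v_{\kappa,s})$, a fresh variable $u_\kappa$ and the binary constraints $\cons{(u_\kappa,v_{\kappa,i}),B_i}$ for $i\in[s]$; the vertex weighting is then applied once to each variable, as the problem requires. The relations $B_i$ force $u_\kappa\in T$ and each scope variable into $D$ in any nonzero-weight configuration; given $\sigma$ on the original variables, the unique nonzero extension sets $u_\kappa=t_{\sigma(\bv_\kappa)}$, which exists precisely when $g(\sigma(\bv_\kappa))>0$. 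As $\lambda\equiv1$ on $D$ and $\lambda(t_{\sigma(\bv_\kappa)})=g(\sigma(\bv_\kappa))$, summing over the $u_\kappa$ recovers $\wt(\sigma)=\prod_\kappa g(\sigma(\bv_\kappa))$, so $Z(I')=Z_g(I)$ and we may take $\phi(I)=1$.

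The hard part is the reverse reduction $\ncspf[\CB;\lambda]\wred\ncspf[g]$, since an arbitrary instance $J$ of $\ncspf[\CB;\lambda]$ need not arise from the construction above; my approach is a structural analysis of its nonzero-weight configurations, in the spirit of Lemma~\ref{lem10}. In any such configuration, every variable occurring as a first argument of a $B_i$-constraint must take a value in $T$, and every variable occurring as a second argument must take a value in $D$; a variable forced into both roles makes $Z(J)=0$, which I detect in polynomial time and dispose of via $\psi(J)=\varepsilon$. I would then turn each $T$-valued variable $x$ into one $g$-constraint of an instance $J'$ of $\ncspf[g]$: a constraint $\cons{(x,y),B_i}$ pins the $i$th coordinate of $x$'s tuple to the $D$-valued variable $y$, so $y$ occupies coordinate $i$ of the scope, while any coordinate pinned by no constraint receives a fresh variable (summing over which reproduces the sum of $g(\ba)$ over the consistent tuples, matching the sum over $\tau(x)\in T$ in $J$). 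Two $D$-valued variables pinned to the same coordinate of the same $x$ are forced equal and I would merge them, which is where the bookkeeping is most delicate; the $D$-valued variables become the variables of $J'$, with vertex weights all equal to $1$. One then checks $Z_g(J')=Z(J)$, again with $\phi(J)=1$.

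The only genuinely subtle point is thus the reverse direction's combinatorics---classifying variables by role, recognising the weight-zero inconsistencies, supplying fresh variables for unpinned coordinates, and merging forced-equal variables---but each step is polynomial-time and weight-preserving, so the two reductions yield $\ncspf[g]\wequ\ncspf[\CB;\lambda]$ and hence, via Theorem~\ref{thm50}, $\ncspf\wequ\ncspf[\CB;\lambda]$. Renaming $A$ as $D$ gives the statement as written.
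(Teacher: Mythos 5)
Your proposal is correct, but it takes a genuinely different route from the paper. The paper also begins by invoking Theorem~\ref{thm50} to reduce to a single function $g$ of arity $r$, but then takes the new domain to be $A=D^r$, the set of \emph{all} $r$-tuples, with vertex weighting $\lambda=g$ and coordinate-equality relations $\beta_{ik}(\ba,\bb)=1$ iff $a_i=b_k$; the variables of the binary instance are the \emph{constraints} of the original instance, and a shared original variable is encoded as a $\beta_{ik}$-constraint between two constraint-variables. Yours is instead an incidence-graph encoding in the spirit of Lemma~\ref{lem10}: the domain $D\cup T$ keeps the original variables, adds one point $t_\ba$ per tuple with $g(\ba)>0$, carries the weights in the vertex weighting ($\lambda\equiv 1$ on $D$, $\lambda(t_\ba)=g(\ba)$) rather than in multiplicities of domain elements as in Lemma~\ref{lem10}, and uses projection relations $B_i$ rather than equality relations. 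Both of your directions check out: the double-role inconsistency correctly forces $Z(J)=0$; the fresh variables for unpinned coordinates contribute exactly $\sum_{\bb} g(\bb)$ over consistent tuples, matching the free sum over $\tau(x)\in T$; and the merging step is sound provided you take the full equivalence closure of the forced-equality relation across all the $T$-valued variables (worth stating explicitly), which is harmless to the vertex-weight bookkeeping precisely because $\lambda\equiv1$ on $D$ in every nonzero-weight configuration. As for what each approach buys: the paper's tuple-domain encoding makes the reverse direction more uniform --- no role classification or zero-detection is needed, and unpinned coordinates are handled automatically as singleton classes of the quotient of $V\times[r]$ by $\sim$ --- and its symmetric relations $\beta_{ik}(\bu,\bv)=\beta_{ki}(\bv,\bu)$ are exactly what licence the DAG canonical form of Remark~\ref{rem60}; your encoding gives a smaller domain ($|D|+|\set{\ba:g(\ba)>0}|$ versus $|D|^r$), keeps $D$ embedded in the new domain, and makes the forward direction nearly trivial, at the price of the case analysis you yourself flag in the reverse direction. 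Since your $\CB$ is again a finite set of binary relations and your $\lambda$ takes values in \nnralgs (and in \nnrats whenever $g$ does), composing with Theorem~\ref{thm70} goes through unchanged.
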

\begin{proof}
We may assume, by Theorem~\ref{thm50}, that $\F=\set{g}$ with
$g\in\funr{\nnralgs}$, for some $r$. Thus, to specify a constraint,
we need only give its scope. We also assume that every variable
appears in some scope, as discussed in Section~\ref{sec:equivalence}.
Then \ncspf[\CB;\lambda] is specified as follows.
\begin{enumerate}
  \item[(a)] The domain $A=D^r\!$, so $\ba=(a_1,a_2,\ldots,a_r)\in A$ for all $a_1,a_2,\ldots,a_r\in~D$.
  \item[(b)] For all $\ba\in A$, $\lambda(\ba)=g(\ba)$.
  \item[(c)] For each $i,k\in[r]$, there is a $\beta_{ik}\in\CB$ such
  that for all $\ba,\bb\in A$,
  \[\beta_{ik}(\ba,\bb)\ =\ \left\{
      \begin{array}{ll}
        1, & \hbox{if $\ba_i=\bb_k$;} \\
        0, & \hbox{otherwise.}
      \end{array}
    \right.\]
\end{enumerate}
Let $I$ be any instance of $\ncspf[g]$, with variable set $V$ and
constraint set $\C$. We can construct an equivalence relation $\sim$ on
$\C\times[r]$ such that $(\iota,i)\sim(\kappa,k)$ if, and only if,
$\bv_{\iota,i}$ and $\bv_{\kappa,k}$ are the same variable $v\in V\!$.
Thus $\sim$ has $|V|$ equivalence classes, each class corresponding
to a variable in $V\!$.

We now construct an instance $I'=\psi(I)$ of \ncspf[\CB;\lambda], which
has variable set $V'$ and constraint set $\C'\!$,  as follows.
\begin{enumerate}
\item[(i)] For each $\kappa\in\C$, we have a variable $\kappa\in V'\!$. Thus $V'=\C$.
\item[(ii)] For all $\kappa\in V'\!$, we have one constraint $\cons{(\kappa),\lambda}\in \C'\!$.
Thus $\lambda$ is a vertex weighting.
\item[(iii)] For all $\iota,\kappa\in\C$, we have a constraint
$\cons{(\iota,\kappa),\beta_{ik}}\in \C'$ for each $i$, $k$ with $(\iota,i)\sim(\kappa,k)$.
\end{enumerate}

Let $\sigma\colon V\to D$ be any configuration for $I$. Then $(\iota,i)\sim(\kappa,k)$
implies $\beta_{ik}(\sigma(\bv_\iota)$, $\sigma(\bv_\kappa))=1$. In
turn, this implies $\sigma(\bv_{\iota,i})=\sigma(\bv_{\kappa,k})$,
as is required by the variables $\bv_{\iota,i}$ and $\bv_{\kappa,k}$
being identical. Thus there is a bijection between the configurations
$\sigma$ of $I$ having nonzero weight and the configurations $\sigma'$ of $I'$
having nonzero weight. Let us write $\sigma'=\xi(\sigma)$ for this bijection.
Note that $\sigma'=\xi(\sigma)$ then satisfies
$\beta_{ik}(\sigma'(\iota),\sigma'(\kappa)\big)
=\beta_{ik}(\sigma(\bv_\iota),\sigma(\bv_\kappa)\big)=1$
if $(\iota,i)\sim(\kappa,k)$. Thus, with $\sigma'=\xi(\sigma)$, we have
\[
    \wt(\sigma')\ =\ \prod_{\kappa\in \C}
    \lambda\big(\sigma'(\kappa)\big)\prod_{(\iota,i)\sim(\kappa,k)}
    \beta_{ik}(\sigma'(\iota),\sigma'(\kappa)\big)
     \ =\ \prod_{\kappa\in \C} g\big(\sigma(\bv_\kappa)\big)\ =\ \wt(\sigma),
\]
so the bijection $\xi$ is weight-preserving. Thus $Z_{\CB;\lambda}(I)=Z_g(I')$,
and we have a weighted reduction from \ncspf[g] to \ncspf[\CB;\lambda], with
$\phi(I)=1$.

Conversely, suppose $I$ is any instance of \ncspf[\CB;\lambda] with variable
set $V$ and constraint set $\C$. We construct an instance $I'=\psi(I)$ of
\ncspf[g], with variable set $V'$ and constraint set $\C'\!$, as follows. Note
that $A$ and the $\beta_{ik}$ are not arbitrary, but have been derived as
in~(a) and~(c) above. Thus, in particular, we can easily
deduce the value of $r$. We now create a relation $\sim$
on the set $V^*=V\times[r]$, as follows. For ease of notation, we
will write $(u,i)\in V^*$ as $u_i$. For $u,v\in V\!$, let $u_i\sim v_k$ if there is a constraint $\cons{(u,v),\beta_{ik}}\in\C$.

Now, suppose $\sigma$ is any configuration of $I$. Then, for any $v\in
V\!$, we have $\sigma(v)=(a_1,\ldots,a_r)\in D^r\!$, from~(a) above. Let
us write $\sigma_i(v)=a_i$ ($i\in[r]$). Now, define $\sigma'\colon V^*\to D$ from
$\sigma$ by $\sigma'(v_i)=\sigma_i(v)$ for all $v\in V^*\!$, $i\in[r]$.
We will write $\sigma'=\zeta(\sigma)$ for this function.
If, for any $u,v\in V\!$, we have $u_i\sim v_k$, then
we must have $\beta_{ik}(\sigma(u),\sigma(v))=1$. From~(c) above,
this implies that $\sigma_i(u)=\sigma_k(v)$, and hence
$\sigma'(u_i)=\sigma'(v_k)$, where $\sigma'=\zeta(\sigma)$. Thus we
can extend the relation $\sim$, as follows. We have $u_i\sim v_k$ if
$\sigma'(u_i)=\sigma'(v_k)$ for all $\sigma'=\zeta(\sigma)$, where
$\sigma$ is a configuration of $I$. Clearly, $\sim$ is now an
equivalence relation, which ``identifies'' the variables $u_i$ and
$v_k$. More precisely, the variable set $V'$ of $I'$ will be the
set of equivalence classes $V^*/\!\!\sim$\/. For any $v_k\in V^*\!$,
we write $\bar{v}_k$ for its equivalence class. Let $\sigma'\colon V^*\to D$
be such that $\sigma'=\zeta(\sigma)$ for some $\sigma\colon V\to D^r\!$. Then we
can define $\bar{\sigma}\colon V'\to D$ by $\bar{\sigma}(\bar{v}_k)=\sigma(v_k)$
for all $v_k\in \bar{v}_k$. Thus we have constructed a bijection
between the configurations $\sigma$ of $I$ having nonzero weight and the
configurations $\bar{\sigma}$ of $I'$ having nonzero weight. We will write
$\bar{\sigma}=\xi(\sigma)$ for this bijection.

Now, $I'$ will have constraint set
\[ \C'=\set{\cons{\bar{\bv},g}:
\bar{\bv}=(\bar{v}_1,\ldots,\bar{v}_r), \ v\in V}\,.\]
Then, with $\bar{\sigma}=\xi(\sigma)$, we have
\[
    \wt(\bar{\sigma})\ =\ \prod_{v\in V'}
    g\big(\bar{\sigma}(\bar{\bv})\big)
     =\ \prod_{v\in V} \lambda\big(\sigma(v)\big)\prod_{u_i\sim v_k}
     \beta_{ik}(\sigma(u),\sigma(v)\big)\ =\ \wt(\sigma),
\]
so the bijection $\xi$ is weight-preserving. Thus $Z_g(I')=Z_{\CB;\lambda}(I)$,
and we have a weighted reduction from \ncspf[\CB;\lambda] to \ncspf[g] with
$\phi(I)=1$.
\end{proof}
\begin{rem}\label{rem35}
In fact, Theorem~\ref{thm60} holds, more generally, for $\F\subset\fun{\algs}$. The proof above needs modification, however, since we cannot apply Theorem~\ref{thm50}. Instead, we use binary relations $\beta^{f,h}_{i,j}$ for each $f,h\in\F\!$, $i\in[r(f)]$, $j\in[r(h)]$, and
domain $A=\bigcup_{f\in\F}\set{\ba_f:\ba\in D^{r},\, r=r(f)}$. We omit
the details, since we currently have no application for this generalisation. The proof of Theorem~\ref{thm70} below is valid only for  $\F\subset\fun{\nnrats}$.
\end{rem}
This yields a different proof of the equivalence of
\ncspw[\nnrats] and \ncsp.
\begin{thm}\label{thm70}
Let $\CB$ be a set of binary relations, and let $\lambda\colon D\to\nnrats$ be a vertex weighting. Then $\ncspf[\CB;\lambda]\wequ\ncspg$, where $\Gamm$ is a set of binary relations.
\end{thm}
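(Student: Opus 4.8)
The plan is to proceed in two stages, mirroring the denominator-clearing of Lemma~\ref{lem05} followed by a domain blow-up in the spirit of Lemma~\ref{lem10}, except that here the blow-up is applied to the \emph{domain} rather than by introducing auxiliary variables. Throughout, the converse reductions will be clean relabellings because no new variables are created.

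First I would reduce to the case of an \emph{integer} vertex weighting. Since $\lambda$ takes values in \nnrats, let $N\in\nats$ be a common denominator of $\set{\lambda(d):d\in D}$ and set $\lambda'(d)=N\lambda(d)\in\nnints$. Because the vertex weighting is applied exactly once to each variable, an instance $I$ with variable set $V$ satisfies $Z_{\CB;\lambda'}(I)=N^{|V|}Z_{\CB;\lambda}(I)$. Taking $\phi(I)=N^{-|V|}$ in one direction and $\phi(I)=N^{|V|}$ in the other (both positive, hence valid) gives $\ncspf[\CB;\lambda]\wequ\ncspf[\CB;\lambda']$. It therefore suffices to treat integer vertex weights, and this is exactly where the restriction to \nnrats rather than \nnralgs is forced, as flagged in Remark~\ref{rem35}.

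Second I would encode the integer weight $\lambda'(d)$ by taking $\lambda'(d)$ disjoint copies of each $d\in D$. Put $A=\set{(d,i):d\in D,\ 1\le i\le\lambda'(d)}$ as the new domain, and for each $\beta\in\CB$ define $\beta'\in\Gamm$ on $A$ by $\beta'((d,i),(e,j))=\beta(d,e)$, so that $\beta'$ ignores the copy indices. Given an instance $I$ of \ncspf[\CB;\lambda'], build $I'=\psi(I)$ of \ncspg with the \emph{same} variable set, replacing each constraint $\cons{\bv,\beta}$ by $\cons{\bv,\beta'}$. Any $\sigma'\colon V\to A$ decomposes as $\sigma'(v)=(\sigma(v),i_v)$, and since the $\beta'$ depend only on first coordinates, $\sigma'$ has nonzero weight precisely when $\sigma$ satisfies every binary constraint; the number of $\sigma'$ lifting a fixed such $\sigma$ is $\prod_{v\in V}\lambda'(\sigma(v))$. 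Summing over lifts gives $Z_\Gamm(I')=\sum_\sigma\prod_v\lambda'(\sigma(v))\prod_\kappa\beta_\kappa(\sigma(\bv_\kappa))=Z_{\CB;\lambda'}(I)$, a weighted reduction with $\phi(I)=1$. The case $\lambda'(d)=0$ needs no special treatment, since then $d$ has no copies and configurations using $d$ carry weight zero on both sides. For the converse I would reinterpret an arbitrary instance $J$ of \ncspg over $D$: using the fixed bijection $\beta\leftrightarrow\beta'$ with which $\Gamm$ is defined, replace each $\cons{\bv,\beta'}$ by $\cons{\bv,\beta}$, keep the variables, and impose the vertex weighting $\lambda'$; the same counting shows $Z_\Gamm(J)=Z_{\CB;\lambda'}(\psi(J))$, again with $\phi=1$. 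Composing the two stages yields $\ncspf[\CB;\lambda]\wequ\ncspg$.

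The point that needs care is the bookkeeping in the converse: I must ensure the map $\beta\mapsto\beta'$ is recoverable, i.e.\ that distinct relations of $\CB$ are not collapsed in $\Gamm$, so that $\psi$ is well defined. This is clean here precisely because the construction enlarges the domain rather than the variable set, so — unlike the converse of Lemma~\ref{lem10} — there are no auxiliary variables whose spurious occurrences must be pruned, and the correspondence between instances of the two problems is an honest relabelling.
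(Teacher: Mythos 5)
Your proof is correct and is essentially the paper's own argument: clear denominators to get an integer vertex weighting (your direct computation $Z_{\CB;\lambda'}(I)=N^{|V|}Z_{\CB;\lambda}(I)$ is exactly the vertex-weighting instance of Lemma~\ref{lem05}), then blow up the domain into $\lambda'(d)$ copies of each $d\in D$ with relations that ignore the copy index, keep the variable set fixed, and count lifts. The one small divergence is that the paper first deletes elements with $\lambda(a)=0$, which is what actually secures the bijection $\beta\leftrightarrow\gamma(\beta)$ you worry about at the end: if zero-weight elements are retained, two distinct relations of $\CB$ differing only on pairs involving such elements \emph{do} collapse in $\Gamm$, though your reduction still survives because any choice of preimage yields the same partition function, the vertex weighting killing all configurations that touch a zero-weight element.
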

\begin{proof}
We will use the equivalence proved in Lemma~\ref{lem05}.
Thus we may take $\lambda\colon A\to \nnints$. Then we use a construction
similar to that of Section~\ref{sec:equivalence}. Note that, if
$\lambda(a)=0$ for any $a\in A$, we can delete $a$ from $A$.
All configurations with $\sigma(v)=a$ for any $v\in V$ have zero weight
and do not contribute to the partition function. Thus we may assume
$\lambda(a)>0$ for all $a\in A$. Then let
\[ B_a\ =\ \set{\onept(a,i):\, i\in[\lambda(a)]\onept}\quad(a\in A),
\qquad B\ =\ \bigcup_{a\in A} B_a\,,\]
where we will again write $(a,i)$ as $a_i$. Then $\Gamm$ will comprise a set of
binary relations $\gamma$ on the domain $B$ such that, for each $\beta\in\CB$,
there is a $\gamma(\beta)$ defined by
\[ \gamma(\beta)\ =\ \set{\halfpt(a_i,b_j): (a,b)\in\beta,\,
i\in[\lambda(a)],\, j\in[\lambda(b)]\halfpt}\,.\]
Clearly, this gives a bijection between $\CB$ and $\Gamm\!$, so we may
also write $\beta=\beta(\gamma)$.

Now, let $I$ be any instance of  \ncspf[\CB;\lambda] with variable set $V$ and
constraint set $\C$. Then $I'=\psi(I)$ will have variable set $V'=V$ and
constraint set
\[ \C'\ =\ \set{\halfpt\cons{(u,v),\gamma}:\gamma=\gamma(\beta),\,
\cons{(u,v),\beta}\in\C\halfpt}\,.\]
Let $\sigma'$ be any satisfying configuration of $I'\!$. This can be
mapped to a configuration $\sigma$ of $I$ satisfying all its binary constraints
by $\sigma(v)=a$ if $\sigma'(v)=a_i$ for some $i\in[\lambda(a)]$.
Let us write $\sigma=\eta(\sigma')$ for this function. Then,

\begin{eqnarray*}
\sum_{\sigma'\in\eta^{-1}(\sigma)}\!\!\!\wt(\sigma')&=& \card{\eta^{-1}(\sigma)}
\ =\ \Big|\prod_{v\in V}\set{\sigma'(v):\sigma'(v)\in B_{\sigma(v)}}\Big|\\
&=&\prod_{v\in V}\card{B_{\sigma(v)}}\ =\ \prod_{v\in V}
\lambda(\sigma(v))\ =\ \wt(\sigma)\,.
\end{eqnarray*}

Thus $Z_{\CB;\lambda}(I)=Z_\Gamm(I')$, so we have $\phi(I)=1$, and we have shown that
$\ncspf[\CB;\lambda]\wred\ncspg$.

Conversely, if $I$ is any instance of \ncspg with variable set $V'$
and constraint set $\C'\!$, we create an instance $I'=\psi(I)$ with
variable set $V=V'$ and constraint set
\[ \C\ =\ \set{\halfpt\cons{(u,v),\beta}:\cons{(u,v),\gamma}\in\C'\!,\,
\beta=\beta(\gamma)\halfpt}\,.\]
Reversing the above calculation yields $Z_{\CB;\lambda}(I')=Z_\Gamm(I)$, so
$\phi(I)=1$ and $\ncspg\wred\ncspf[\CB;\lambda]$. Hence $\ncspf[\CB;\lambda]\wequ\ncspg$.
\end{proof}
\begin{rem}\label{rem40}
Observe that this proof does not really require that the relations in $\CB$
are all binary. We have made this restriction only for notational simplicity,
and because it is the case needed for the following application.
\end{rem}
Combining Theorems~\ref{thm60} and~\ref{thm70}, we have an
alternative proof of the results implied by Theorem~\ref{thm10}. That is,
$\ncspw[\nnrats]\wequ\ncsp$ and a dichotomy theorem for \ncspw[\nnrats].
\begin{rem}\label{rem50}
Theorems~\ref{thm60} and~\ref{thm70} determine a canonical form for
$\ncspw[\nnrats]$. The general problem is a set of $k$ digraphs,
$H_1,H_2,\ldots,H_k$ on the same vertex set $D$. An instance is a set
of $k$ digraphs, $G_1,G_2,\ldots,G_k$ on the same vertex set $V\!$. A
satisfying configuration is a labelling of $V$ with $D$
that induces a homomorphism from $G_i$ to $H_i$ for all $i\in[k]$.
Cai and Chen~\cite{CaiChe09} have given a decidable
dichotomy theorem for the case $k=1$ of this problem.
\end{rem}
\begin{rem}\label{rem60}
The digraphs $H_1,H_2,\ldots,H_k$ in the canonical problem of
Remark~\ref{rem50} can be taken to be \emph{directed acyclic
graphs} (\emph{DAGs}), though possibly with loops. A decidable
dichotomy theorem for the case $k=1$ of this problem (without
loops) was given by Dyer, Goldberg and Paterson~\cite{DyGoPa07}.
The simplification can be justified as follows. Suppose we impose
an arbitrary linear order on $A$. By the symmetries
$\beta_{ij}(\bu,\bv)=\beta_{ji}(\bv,\bu)$ in the proof of
Theorem~\ref{thm60}, we need only include $(\bu,\bv)$ in the
relation $\beta_{ij}$ if $\bu\leq\bv$. Thus each $\beta_{ij}$
describes a DAG, perhaps having loops on its vertices.
\end{rem}

\end{document}